% ----------------------------------------------------------------
% AMS-LaTeX definitions ******************************************
% ----------------------------------------------------------------
\documentclass[10pt,conference]{IEEEtran}
\usepackage[centertags]{amsmath}
\usepackage{amsfonts}
\usepackage{amssymb}
\usepackage{epsfig}
\usepackage{amsmath,  amssymb}
\usepackage{graphicx}
\usepackage[centertags]{amsmath}
\usepackage{clrscode}
 
\usepackage{cite}

%%% --------------------------------------------------------------
\usepackage{fullpage}
\usepackage{amsmath,amssymb,pseudocode}
\usepackage{color}
\usepackage[normalem]{ulem}

%\newpage

\begin{document}
\title{On Coding for  Cooperative  Data Exchange }

\author{\authorblockN{Salim El Rouayheb}
\authorblockA{Texas A\&M University \\
%College Station, Texas, USA\\
Email: rouayheb@tamu.edu}
\and
\authorblockN{Alex Sprintson}
\authorblockA{Texas A\&M University \\
%College Station, Texas, USA\\
Email: spalex@tamu.edu}
\and
\authorblockN{Parastoo Sadeghi}
\authorblockA{Australian National University\\
Email: parastoo.sadeghi@anu.edu.au}
\and
}

\long\def\symbolfootnote[#1]#2{\begingroup%
\def\thefootnote{\fnsymbol{footnote}}\footnote[#1]{#2}\endgroup}

\maketitle

\newtheorem{theorem}{Theorem}%[section]
\newtheorem{lemma}[theorem]{Lemma}
\newtheorem{claim}[theorem]{Claim}
\newtheorem{proposition}[theorem]{Proposition}%[section]
\newtheorem{condition}[theorem]{Condition}%[section]
\newtheorem{observation}[theorem]{Observation}%[section]
\newtheorem{conjecture}[theorem]{Conjecture}%[section]
\newtheorem{property}[theorem]{Property}%[section]4
\newtheorem{assertion}[theorem]{Assertion}%[section]
\newtheorem{example}[theorem]{Example}%[section]
\newtheorem{definition}[theorem]{Definition}%[section]
\newtheorem{corollary}[theorem]{Corollary}%[section]
\newtheorem{remark}[theorem]{Remark}%[section]
\newtheorem{note}[theorem]{Note}
\newtheorem{problem}[theorem]{Problem}

\begin{abstract}
We consider the problem of data exchange by a group of closely-located wireless nodes. In this problem each node holds a set of packets and needs to obtain all the packets held by other nodes. Each of the nodes can
% alex- removed "the"
broadcast the packets in its possession (or a combination thereof) via a noiseless broadcast channel of capacity one packet per channel use. The goal is to minimize the total number of transmissions needed to satisfy the demands of all the nodes, assuming that they can cooperate with each other and are fully aware of the packet sets available to other nodes. This problem arises in several practical settings, such as peer-to-peer systems and wireless data broadcast. In this paper, we establish upper and lower bounds on the optimal number of transmissions and present an efficient algorithm with provable performance guarantees. The effectiveness of our algorithms is established through numerical simulations.
\end{abstract} 

%\blfootnote{This work was supported  by Qatar Telecom (Qtel), Doha , Qatar}

 \symbolfootnote[0]{The work of Alex Sprintson was supported  by Qatar Telecom (Qtel), Doha, Qatar.\\
 The work of Parastoo Sadeghi was supported under Australian Research
Council's Discovery Projects funding scheme (project no. DP0984950).}

\section{Introduction}

%\subsection{Motivation and Background}
In recent years there has been a growing interest in developing cooperative strategies for wireless communications~\cite{kramer:guest:tit:2007,sendonaris:erkip:aazhang:tcom:03:part:i}. Cooperative communication is a promising technology for the future that can provide distributed space-time diversity, energy efficiency, increased coverage, and enhanced data rates. In a cooperative setting users aid each other to achieve a common goal sooner, or in a more robust or energy efficient manner. This is in contrast to a traditional non-cooperative scenario where users compete against each other for channel resources (time, bandwidth, \emph{etc}).
 
% Cooperative communication is a promising technology for the future that can provide distributed space-time diversity~\cite{nosratinia:hunter:hedayat:ieee:commm:2004}, energy efficiency~\cite{hong:huang:chiu:kuo:ieee:spm:2007}, increased coverage~\cite{erkip2004}, and enhanced data rates~\cite{sendonaris:erkip:aazhang:tcom:03:part:i}. In contrast to a traditional non-cooperative scenario, where users compete against each other for channel resources (time, bandwidth, \emph{etc}), in a cooperative setting users aid each other to achieve a common goal sooner or in a more robust or energy efficient manner.

In this paper we consider the problem of cooperative data exchange. To motivate the problem, consider a group of mobile users or clients who wish to download a large file, which is divided into $n$ packets, from a base station. The common goal here is to minimize the total download time. The long-range link from each client to the base station is subject to long-term path loss and shadowing, as well as short-term fading~\cite{Rappaport2002}, which often render it unreliable and slow. As a result, after $n$ transmissions from the base station, each client may have received only a subset of degrees of freedom required for full decoding. A possible strategy for the base station is to employ network coding and keep transmitting innovative packets until every client has fully decoded all packets~\cite{sadeghi:traskov:koetter:netcod:09,sundararajan:shah:medard:isit:08}.

\begin{figure}[b]
\begin{center}
\includegraphics[width=2.5in]{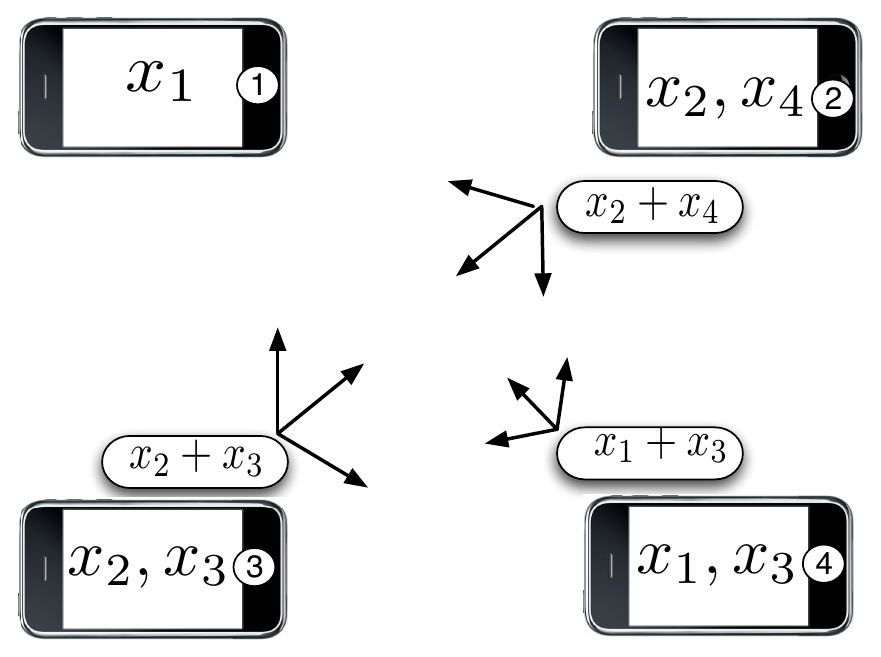}
 \caption{Data exchange among four clients. \label{fig:example1}}%
\end{center}
\end{figure}

If the clients happen to be in the vicinity of each other an alternative strategy for the clients is to switch to short-range transmission as soon as all packets are collectively owned by the group. This strategy has two main benefits:
\begin{itemize}
	\item  short-range communications is often much more reliable and faster;
	\item  after only $n$ transmissions, the valuable long-range channel is freed and the base station can serve other clients in the system.
\end{itemize}

To illustrate the problem further, consider four wireless clients who had requested $n=4$ packets, $x_1,\dots,x_4 \in GF(2^m)$, from the base station. However, due to channel imperfections, the first client only received packet $x_1$, while second, third, and fourth clients received packets $\{x_2, x_4\}$, $\{x_2, x_3\}$, and $\{x_1, x_3\}$, respectively. Since they have collectively received all the packets, they can now try to communicate among themselves to complete the communication and  ensure that all the clients eventually possess all the packets. 

In this paper we make the following assumptions:
\begin{enumerate}
	\item each mobile client can broadcast data to all other clients at a rate of one packet per transmission, i.e., $m$ bits per transmission in this example. Furthermore, all the clients  receive this transmission error-free;
	\item  each client knows the  packets that were received by others.
\end{enumerate}

The first assumption is justified  when mobile users are close to each other. The second condition can be easily achieved at the beginning by broadcasting the index of the received packets to the other clients.

A naive solution to this problem would require four transmissions: the first client sends $x_1$, the second client sends $x_2$ and then $x_4$, and finally the third client sends $x_3$.
But, the question that we are interested in here is ``can the clients do better in terms of number of transmissions and if so, how?'' For the above example, it is easy to see that with coding, we can reduce the number of transmissions from 4 to 3.  Figure~\ref{fig:example1} shows a coding scheme with 3 transmissions where the second, third, and fourth clients  send the coded packets $x_2+x_4$, $x_2+x_3$ and $x_1+x_3$, respectively. It can be verified that all the mobile clients can then decode all the packets. In fact, this is the minimum number of transmissions since the first client had initially received one packet hence it needs to receive at least three degrees of freedom from the other clients.

This problem is  related to the \emph{index coding} problem \cite{BK06,BBJK06, RSG08,RSG09ISIT,RSA09i} in which the different clients cannot communicate with each other, but can receive transmissions from a server possessing all the data. Moreover, in the index coding problem different clients might have different demands, while in the problem considered here each client wants to obtain  all the available packets. Another related line of work is that of gossip algorithms \cite{D09} where the goal is to efficiently and distributively compute a function of the data present in a dynamic network.

%\subsection{Contributions}

\textbf{Contribution.} In this paper, we develop a framework towards finding optimum strategies for cooperative data exchange. To the best of our knowledge, this problem setting has not been previously considered in the literature. We establish upper and lower bounds on the optimal number of transmissions. We also present an efficient algorithm with a provable performance guarantee. The effectiveness of our approach is verified through numerical simulations.

\textbf{Organization.} The rest of this paper is organized as follows.
In Section~\ref{sec:model}, we present the model and formally define the problem.
In Section~\ref{sec:bounds},  we establish lower and upper bounds on the required number of transmissions. In Section~\ref{sec:alg}, we present a deterministic algorithm and analyze its performance. Our evaluation results are presented in Section~\ref{sec:numerical}. Finally, conclusions and directions for future work appear in Section~\ref{sec:conclusion}.

%After introducing the system model in Section \ref{}, we will formulate the problem as rank optimization in Section \ref{}. In Section \ref{} we will prove some upper and lower bounds on the optimum number of transmissions and investigate the tightness of the bounds through some examples. One main contribution of our paper is to propose two algorithms for distributed data exchange in Section \ref{} that empirically have near-optimal performance and one of them provably \textbf{TO CONFIRM after deciding about unique packets: requires no more than twice the optimum number of transmissions}. We conclude the paper in Section \ref{} with a summary of results and directions for future work.

\section{Model}\label{sec:model}

The problem can be formally defined as follows. A set $X=\{x_1,\dots,x_n\}$ of $n$ packets each belonging to a finite alphabet $\mathcal{A}$ needs to be delivered to a set of $k$ clients $C=\{c_1,\dots,c_k\}$. Each client $c_i\in C$ initially holds a subset $X_i$ of packets in $X$, i.e., $X_i\subseteq X$. We denote by $n_i=|X_i|$ the number of packets initially available to  client $c_i$, and by $\overline{X_i}=X\setminus X_i$ the set of packets required by $c_i$.   We assume that the clients collectively know all packets in $X$, i.e., $\displaystyle\cup_{c_i\in C} X_i = X.$ Each client can communicate to all its peers through an error-free broadcast channel of capacity one packet per channel use. The problem is to find a scheme that requires the minimum number of transmissions to make all the packets available to all the clients.

We focus in this paper on the design of linear solutions to the problem at hand. In a linear solution, each packet is considered to be an element of a finite field $\mathbb{F}$ and all the encoding operations are linear over this field. For a given instance of this problem, define $\tau$ to be the minimum total  number of transmissions required to satisfy the demands of all the receivers with linear coding.

We denote by $n_{min}$ the minimum number of packets held by a client, i.e.,
$n_{min} =\min_{1\leq i\leq k}n_i,$ and by $n_{max}$ the maximum number of packets held be a client, i.e.,  $n_{max} =\max_{1\leq i\leq k}n_i$.

We say that a client $c_i$ has a \emph{unique} packet $x_j$ if \mbox{$x_j\in X_i$} and $x_j\notin X_\ell$ for all $\ell\neq i$. Note that, without loss of generality, we can assume that no client has a unique packet. Indeed, a unique packet can be broadcast by the client holding it in an uncoded form without any penalty in terms of optimality.

%
%
%
%\begin{lemma} \label{lem:LowerBound}
%Let the set of unique packets only owned by client $c_j$ be denoted by $U_j = X_j \setminus (X_1 \cup X_2 \cup \cdots \cup X_{j-1} \cup X_{j+1} \cup \cdots X_k)$. The minimum number of transmissions is lower bounded by
%\begin{align}\label{eq:LowerBound2}
%    \tau&\geq \sum_{j=1}^{k} |U_j| + n'_{max}
%\end{align}
%where $n'_{max}$ is the maximum number of unknown packets of a client after sharing the unique packets. That is $n'_{max} = \max_{j}\{\overline{X}_j\setminus (\bigcup_{i} U_i)\}$.
%\end{lemma}
%We will refer to $\sum_{j=1}^{k} |U_j|$ as $n_{\mathrm{unique}}$.
%

%\textbf{New: In this paper, we are not concerned with the `fairness' to clients. In other words, we accept solutions in which a client has to transmit more to satisfy the demands of other clients.}
% Alex: I am not sure about this - it might be confusing.

\section{Upper and Lower Bounds}\label{sec:bounds}

We begin by establishing a lower bound on the number of transmissions.

\begin{lemma}\label{lem:1}
The minimum number of transmissions  is greater or equal to $n-n_{min}$. If all clients initially have the same number of packets $n_{min}< n$, i.e., $n_i=n_{min}$ for  $i=1,\dots, k$, then the minimum number of transmissions  is greater or equal $n-n_{min}+1$.
\end{lemma}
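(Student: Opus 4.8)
The plan is to prove the two claims separately, both via a cut-set / dimension-counting argument. For the first bound, fix any client $c_i$ and consider its decoding requirement. Client $c_i$ starts with $n_i$ packets, hence possesses only $n_i$ degrees of freedom in the vector space $\mathbb{F}^n$ spanned by $x_1,\dots,x_n$. To recover all $n$ packets it must accumulate $n$ linearly independent combinations, so it needs at least $n-n_i$ useful (innovative) transmissions from the other clients. Since in a round of $\tau$ total transmissions the transmissions not originating from $c_i$ number at most $\tau$ (in fact exactly $\tau$ minus $c_i$'s own transmissions, but we only need an upper bound on what $c_i$ can receive from others), we get $\tau \ge n - n_i$. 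Taking $i$ to be the client with $n_i = n_{min}$ yields $\tau \ge n - n_{min}$.

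For the second, stronger bound, assume $n_i = n_{min} < n$ for every $i$, and suppose toward a contradiction that a valid linear scheme uses exactly $\tau = n - n_{min}$ transmissions. The key observation is that, by the argument above, \emph{every} client $c_i$ needs at least $n - n_{min} = \tau$ innovative combinations from the \emph{other} clients; hence \emph{none} of the $\tau$ transmissions can originate from $c_i$ itself without wasting one of $c_i$'s needed slots — more precisely, each client $c_i$ must receive all $\tau$ transmissions from others and every one of them must be innovative for $c_i$. But every transmission is sent by \emph{some} client $c_j$, and that transmission is then not received "from others" by $c_j$; so $c_j$ receives at most $\tau - 1$ transmissions from the other clients, contradicting the requirement that it needs $\tau$ innovative ones. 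This forces $\tau \ge n - n_{min} + 1$.

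I would write the contradiction cleanly as follows: in any scheme with $\tau$ transmissions, let client $c_j$ be the sender of (say) the first transmission. Then all transmissions received by $c_j$ from its peers number at most $\tau - 1$, so $c_j$ can gain at most $\tau - 1$ new degrees of freedom, ending with at most $n_{min} + \tau - 1$ dimensions. For $c_j$ to decode we need $n_{min} + \tau - 1 \ge n$, i.e. $\tau \ge n - n_{min} + 1$. This also transparently re-derives the first bound (when the $n_i$ may differ, the sender of a transmission need not be the min-client, so we only get $\tau \ge n - n_{max} $ from this sender-based argument — hence for the first part one really does want the weaker "receives at most $\tau$ from others" bound applied to the min client).

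The only subtlety — and the main thing to state carefully rather than a genuine obstacle — is the precise meaning of "innovative": a transmission is a fixed linear functional on $\mathbb{F}^n$, and "received innovative packets" must be interpreted as the rank of the span of (initial packets of $c_i$) together with (all functionals transmitted by others) being at least $n$. The bound then is just: $n_i + (\text{number of transmissions not sent by } c_i) \ge \operatorname{rank}(\cdots) = n$. No finite-field size issues arise since this is a converse bound and holds for any linear scheme over any $\mathbb{F}$. I do not anticipate a hard step; the proof is a short counting argument and the write-up is essentially the two displayed inequalities above.
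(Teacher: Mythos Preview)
Your proposal is correct and follows essentially the same approach as the paper: the first bound comes from the fact that each client $c_i$ needs at least $n-n_i$ new degrees of freedom, and the second bound comes from the observation that a transmitting client does not benefit from its own transmission. The paper states these two facts in two sentences; your write-up simply unpacks them in more detail.
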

\begin{proof}
 The first part follows from the fact that each client needs to receive at least $n-n_i$ packets. The second part follows from the fact that a transmitting client does not benefit from its own transmissions.
\end{proof}

Next, we present an upper bound on the minimum required number of
transmissions $\tau$.

\begin{lemma} \label{lem:UpperBound1}
 For $|\mathbb{F}|\geq k$, it holds that
\begin{align}\label{eq:UpperBound}
    \tau&\leq \min_{1\leq i\leq k}\{| \overline{X}_i|+\max_{1\leq j\leq k}|\overline{X}_j\cap X_i|\}.
\end{align}
%where $i,j\in \{1,\dots,k\}$ and $\overline{X_j}=X\setminus X_j$.
\end{lemma}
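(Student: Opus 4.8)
The plan is to prove something slightly stronger: for \emph{every} fixed client $c_i$ there is a linear scheme using exactly $|\overline{X}_i|+M_i$ transmissions, where $M_i:=\max_{1\le j\le k}|\overline{X}_j\cap X_i|$; taking the minimum over $i$ then yields \eqref{eq:UpperBound}. The scheme has two phases. In the first phase, for each packet $x\in\overline{X}_i$ we let some client that holds $x$ broadcast it uncoded; such a client exists and is distinct from $c_i$ because $\bigcup_\ell X_\ell=X$ while $x\notin X_i$. This costs $|\overline{X}_i|$ transmissions, after which every client $c_j$ knows $X_j\cup\overline{X}_i$, so the packets it still needs are $\overline{X}_j\setminus\overline{X}_i=\overline{X}_j\cap X_i$, a subset of $X_i$ of size at most $M_i$. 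In particular every remaining demand now lies inside $X_i$, which $c_i$ has held from the start.

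In the second phase client $c_i$ alone sends $M_i$ coded packets, each a linear combination of $X_i$. Relabel $X_i=\{p_1,\dots,p_{n_i}\}$ and, for $1\le j\le k$, let $U_j\subseteq\{1,\dots,n_i\}$ index the packets of $\overline{X}_j\cap X_i$, so $|U_j|\le M_i$ and, after Phase~1, client $c_j$ knows $\{p_s:s\notin U_j\}$. Let $c_i$ transmit $y_\ell=\sum_{s=1}^{n_i}a_{\ell,s}\,p_s$ for $\ell=1,\dots,M_i$, with coefficients $a_{\ell,s}\in\mathbb{F}$ to be chosen. Subtracting off the packets it already knows, client $c_j$ is left with a system of $M_i$ linear equations in the unknowns $\{p_s:s\in U_j\}$ whose coefficient matrix is the column-submatrix, indexed by $U_j$, of $A=(a_{\ell,s})\in\mathbb{F}^{M_i\times n_i}$. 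Hence it suffices to choose $A$ --- equivalently its columns $v_1,\dots,v_{n_i}\in\mathbb{F}^{M_i}$ --- so that for every $j$ the vectors $\{v_s:s\in U_j\}$ are linearly independent; then each $c_j$ solves its system, recovers $\overline{X}_j\cap X_i$, and (together with Phase~1) obtains all of $X$.

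I would construct the columns greedily. Having picked $v_1,\dots,v_{s-1}$, the new column $v_s$ must avoid, for every $j$ with $s\in U_j$, the span of the columns already chosen inside $U_j$; since $|U_j|\le M_i$, each such span is a proper subspace of $\mathbb{F}^{M_i}$ and therefore lies in a hyperplane, and at most $k$ of the index sets $U_j$ contain $s$ (in fact at most $k-1$, since $c_i$ is never missing $p_s$). A union of at most $k$ hyperplanes through the origin has at most $1+k\big(|\mathbb{F}|^{M_i-1}-1\big)$ points, which is strictly less than $|\mathbb{F}|^{M_i}$ whenever $|\mathbb{F}|\ge k$ (the case $k=1$ being trivial, as then a single client already holds $X$); so a valid $v_s$ exists at every step. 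This last counting step is the only delicate point, and it is exactly why the hypothesis $|\mathbb{F}|\ge k$ is imposed: it lets us realize the required ``MDS-type'' independence pattern over a field as small as $k$, exploiting that only $k$ receivers constrain each column.
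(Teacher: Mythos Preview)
Your proof is correct and follows the same two-phase scheme as the paper: first satisfy a chosen leader $c_i$ by $|\overline{X}_i|$ uncoded transmissions, then have $c_i$ broadcast $M_i$ coded packets. The only difference is that where the paper simply cites the network-coding literature~\cite{JSCEEJT04} to assert that $M_i$ transmissions suffice in Phase~2 when $|\mathbb{F}|\ge k$, you make this self-contained via a greedy column-by-column construction and a hyperplane-union count---a nice elementary substitute for the citation, but not a different strategy.
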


\begin{proof}
Consider the following solution consisting of two phases:
\begin{enumerate}
  \item Phase 1: pick a client $c_i$ and make it a ``leader'' by satisfying all of its  demands with
    uncoded transmissions from the other clients. This requires $|\overline{X_i}|$ transmissions.
  \item Phase 2: client $c_i$ broadcasts coded packets to satisfy the demands of  all the other clients.
\end{enumerate}

 After Phase 1, each client $c_j$ knows all the packets in $X_j\cup \overline{X_i}$ and, thus, requires packets
  $\overline{X}_j\cap X_i$. By using  network coding techniques (see e.g., \cite{JSCEEJT04}), Phase 2 can
   be accomplished using $\max_{j}|\overline{X}_j\cap X_i|$ transmissions, provided that the size of the finite field
   $\mathbb{F}$ is at least $k$. Indeed, the leader $c_i$ can form encoded packets in such a way that, after each transmission,  the degree
    of freedom is increased by one  for every client $j\neq i$ who still have not received all the packets.
%After Phase 1, each client $c_j$ has the set  $X_j\cup \overline{X_i}$.
% By  Lemma  \ref{lem:1}, we know that we need now at least $\max_{j}|\overline{X_j\cup\overline{X}_i}|=\max_{j}|\overline{X}_j\cap X_i|$ transmissions.   From network coding theory, see e.g.~\cite{jaggi}, we know that this bound can be achieved by letting client $c_i$ broadcast linearly coded packets over a field of size larger than $k$. The bound is then obtained by optimizing over the different possible choices of $i$.
\end{proof}

The bound of Lemma \ref{lem:UpperBound1} is tight in many instances, in particular when all the sets $X_i$ are disjoint (i.e., \mbox{$X_i\cap X_j=\emptyset, i\neq j$}). In this trivial case, the minimum number of transmissions is equal to $n$, and this bound is tight. The following is a non-trivial example where the above bound is
also tight: $X_1=\{x_2,x_3\}$, $X_2=\{x_1,x_3\}$ and
$X_3=\{x_1,x_2\}$. In this case, the upper bound of Lemma \label{lem:UpperBound} gives $\tau\leq
2$. But, lemma \ref{lem:1} gives $\tau\geq n-n_{min}+1=2$. Therefore $\tau=2$, and one way this can be achieved is by letting $c_1$ and $c_2$ transmit $x_2+x_3$ and $x_1+x_3$, respectively.

%The following is a non-trivial example where the above bound is
%tight: $X_1=\{x_6,x_5\}$, $X_2=\{x_1,x_2,x_3,x_4,x_5,x_7\}$ and
%$X_3=\{x_7,x_8\}$. In this case, the previous bound gives $\tau\leq
%7$, which is obtained by making $c_2$ a leader. From Lemma
%\ref{lem:1}, we know that $\tau$ is at least 6. Now, we show that we
%indeed need a minimum of $\tau=7$ transmissions. We note that $c_1$
%is the only client with $x_6$, $c_2$ the only client with $x_1, x_2,
%x_3, x_4$ and $c_3$ the only client with $x_8$. So to exchange these
%packets, we need at least 6 transmissions. But, we see that even
%after this phase, $c_1$ is missing $x_7$ and $c_3$ is missing $x_5$.
%Therefore $\tau>6$. So $\tau_{min} = 7$ and the bound of Lemma
%\ref{lem:UpperBound} is non-trivially tight and is strictly less
%than $n$.

The bound of Lemma \ref{lem:UpperBound1} is not always tight since
the scheme described in the proof
 is not guaranteed to be optimal. This is due to the fact that a client is
 made a leader by only transmitting  uncoded messages. Consider, for example, the following instance
 with four clients and where $X_1=\{x_2,x_3,x_4\}$, $X_2=\{x_1,x_4\}$, $X_3=\{x_1,x_2,x_4\}$ and $X_4=\{x_1,x_3\}$.
 From the previous lemma, we get $\tau\leq 3$. However, there exists a solution with two transmissions where  $c_1$
 transmits $x_3+x_4$ and $c_3$ transmits  $x_1+x_2+x_4$. By Lemma \ref{lem:1}, we know that this scheme is optimal,
 and $\tau=2$.

In the next section, we present an additional bound on $\tau$. In particular, Lemma~\ref{lem:alex:alg} shows that $$\tau\leq 2n-n_{max}-n_{min}.$$

\section{A Deterministic Algorithm}\label{sec:alg}

We proceed to present an efficient deterministic algorithm for the information exchange problem.
At each iteration of the algorithm, one of the clients broadcasts a linear combination of the packets in $X$.
 For a  coded packet $x$ we denote by $\Gamma_x\in \mathbb{F}^n$ the corresponding vector of linear coefficients,
  i.e., $x=\Gamma_x\cdot (x_1,\dots,x_n)^T$.

We also denote by $Y_i$ the subspace spanned by  vectors  corresponding to the linear combinations available
 at client $c_i$.  In the beginning of the algorithm, $Y_i$ is equal to the subspace spanned by vectors that correspond
  to the packets in  $X_i$, i.e., $Y_i=\langle\{\Gamma_x\ |\ x\in X_i\}\rangle$. The goal of our algorithm is to
   simultaneously increase the dimension of the subspaces $Y_i$, $i=1,\dots, k$,  for as many clients as possible.

% Let $Y_i$ denote the subspace spanned by the vectors available to client $i$ at the current iteration. Note that initially $Y_i=X_i$, and in the end of the algorithm $Y_i=X$.

Specifically, at each iteration, the algorithm identifies a client $c_i\in C$ whose subspace $Y_i$ is of maximum dimension.
 Then, client $c_i$ selects a vector $b \in Y_i$ in a way that increases the dimension of $Y_j$ for each client
 $c_j\neq c_i$, and transmits the corresponding packet $b\cdot(x_1,\dots,x_n)^T$. Vector $b$ must satisfy $b\notin Y_j$ for all  $j\neq i$.
 Such a vector $b$ exists and can be selected using the network coding techniques \cite{JSCEEJT04} provided that
  $|\mathbb{F}|\geq k$.

At any iteration, the subspace associated with a certain client will correspond to the original packets possessed by this
client, in addition to the transmitted packets in the previous transmissions. As a result, at some iteration, the subspaces
 associated with a number of clients may become identical. In this case, without loss of generality, we merge this group
 of clients into a single client with the same subspace.

 The formal description of the algorithm is presented below.

%
%%We note that if for two clients $c_i\neq c_j$ it holds that $Y_i=Y_j$ at the current iteration,
%then $Y_i$ will be identical to $Y_j$ for the rest of the algorithm, hence we can only consider one of the
% clients say $c_j$. Accordingly, we remove clients whose subspaces are identical to other clients in $C$. The
% formal description of the algorithm is presented below.

\begin{codebox}
\Procname{$\proc{Algorithm\ IE }(Information\ Exchange)$}
\li \For $i \gets 1$ \To $k$
\li \Do
\li  $Y_i=\langle \{\Gamma_x\ |\ x\in X_i\}\rangle$
\End
\li \While there is a client $i$ with $\dim Y_i< n$ \label{ref:alg:2:step:1}
\li \Do
\li \While $\exists c_i, c_j\in C$  $i\neq j$, such that $Y_i=Y_j$
\li \Do
\li  $C=C\setminus \{c_i\}$\label{ref:alg:2:step:2}
\End
\li  Find a client $c_i$ with a   subspace $Y_i$ of
\zi  maximum dimension \label{ref:alg:2:step:3} (If there are multiple
\zi such clients  choose an arbitrary one of them)
\li  Select a vector $b\in Y_i$ such that $b\notin Y_j$
\zi  for each $j\neq i$ \label{ref:alg:2:step:4}
\li  Let client $c_i$  broadcast packet $b\cdot(x_1,\dots,x_n)^T$
\li \For $\ell=1 \gets 1$ \To $k$
\li \Do
\li  $Y_i\leftarrow Y_i +  \langle \{b\}\rangle$
\End

%\li \Comment Insert $A[j]$ into the sorted sequence

\end{codebox}

%
%
%\textbf{Algorithm}:
%\begin{enumerate}
%    \item Find a client $i$ with the  subspace $Y_i$ of maximum dimension (if the dimension of $Y_i$ is constant for all clients, choose an arbitrary client). \label{Step:1}
%    \item Client $i$ broadcasts a linear combination of packets of $Y_i$ in a way that increases the dimension of $Y_j$ for each $j\neq i$, such that $Y_j\neq X$.
%    \item (Cleanup) While there exist two clients $i,j$ such that $Y_i=Y_j$ do
%    \begin{enumerate}
%        \item Remove one of the clients, say $i$.
%    \end{enumerate}
%    \item If $Y_i=X$ for all clients, terminate, otherwise go to Step~\ref{Step:1}.
%\end{enumerate}

\begin{lemma}\label{lem:alex:alg}
     The number of transmissions made by $\proc{Algorithm\ IE}$ is at most $\min\{n,2n-n_{max}-n_{min}\}$,  provided that
     $|\mathbb{F}| \geq k$.
\end{lemma}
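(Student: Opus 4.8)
The plan is to establish the two bounds $T\le n$ and $T\le 2n-n_{max}-n_{min}$ separately, where $T$ denotes the number of transmissions made by $\proc{Algorithm\ IE}$; the minimum then follows. Throughout I would use the fact, already justified when the algorithm was described, that whenever some client is not yet full an admissible vector $b$ can be found as long as $|\mathbb{F}|\ge k$, together with two running invariants. \emph{(i)} $\sum_i Y_i=\mathbb{F}^n$ holds at every stage: it holds initially since $\bigcup_i X_i=X$, it is preserved by a transmission (the $Y_i$ only grow), and it is preserved by the merging step, which deletes a client $c_j$ only when $Y_j=Y_i$ for a surviving $c_i$. Hence, while the loop runs there are at least two active clients (a lone active client would have $Y_i=\mathbb{F}^n$ and the loop would have stopped), and since the merging step leaves at most one client with $Y_i=\mathbb{F}^n$, the chosen leader $c_i$ always has some other active client $c_j$ with $Y_j\subsetneq Y_i$. \emph{(ii)} Just after the transmission of $b_t$ in iteration $t$, every active client $c_j$ has $b_t\in Y_j$ --- a non-leader because it adjoins $\langle b_t\rangle$, the leader because it picked $b_t$ from $Y_i$ --- and this persists since subspaces never shrink.

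\textbf{Bound $T\le n$.} I would show the transmitted vectors $b_1,\dots,b_T$ are linearly independent. The active set only shrinks, so the leader $c_i$ at iteration $t$ was active at every earlier iteration $s<t$; by (ii), $b_s\in Y_i$ for each such $s$, hence $\langle b_1,\dots,b_{t-1}\rangle\subseteq Y_i$. The algorithm selects $b_t\in Y_i$ with $b_t\notin Y_j$ for every other active client, and by (i) such a $c_j$ exists; since $\langle b_1,\dots,b_{t-1}\rangle\subseteq Y_j$ as well, $b_t\notin\langle b_1,\dots,b_{t-1}\rangle$. Thus $b_1,\dots,b_T$ are $T$ independent vectors in $\mathbb{F}^n$, so $T\le n$.

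\textbf{Bound $T\le 2n-n_{max}-n_{min}$.} I would track, just after the merging step of iteration $t$, the smallest and largest active dimensions $d^-_t=\min_j\dim Y_j$ and $d^+_t=\max_j\dim Y_j$; merging changes neither, as it only discards duplicate subspaces. A transmission keeps the leader's dimension fixed and raises every other active dimension by exactly one (since $b_t\notin Y_j$). Call iteration $t$ \emph{wasted} if $d^-_t=d^+_t$ and \emph{productive} otherwise. If $t$ is productive, the leader (having dimension $d^+_t>d^-_t$) is not a minimum client, so every minimum client moves to $d^-_t+1$ while no client ends below that: $d^-$ rises by exactly one. If $t$ is wasted, all active clients share one dimension $d=d^-_t<n$; the leader stays at $d$ and each of the remaining (at least one) active clients moves to $d+1$, so $d^-$ is unchanged while $d^+$ rises by exactly one. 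The loop halts precisely when $d^-$ first equals $n$, and $d^-$ starts at $n_{min}$ and increases by one on productive steps and zero on wasted steps, so there are exactly $n-n_{min}$ productive iterations; with $f$ the number of wasted iterations, $T=(n-n_{min})+f$. Finally, $d^+$ is nondecreasing and grows by at most one per iteration (the leader keeps $d^+_t$, every other client ends at most $d^+_t+1$, merging is neutral); it starts at $n_{max}$ and equals $n$ when the loop halts, so it strictly increases in exactly $n-n_{max}$ iterations, and since every wasted iteration strictly increases $d^+$ we get $f\le n-n_{max}$. Hence $T=(n-n_{min})+f\le 2n-n_{max}-n_{min}$, which together with $T\le n$ finishes the proof.

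The part I expect to be most delicate is the accounting around the merging step: one must verify that it never alters any client's dimension, that at least two clients remain active as long as the loop runs, that it leaves at most one full client so the chosen leader can always find an admissible $b$ (this, with $|\mathbb{F}|\ge k$, is the only place the field-size hypothesis is used --- equivalently, the $\le k-1$ proper subspaces $Y_i\cap Y_j$ cannot cover $Y_i$), and that the leader at iteration $t$ was active --- and hence already holds $b_1,\dots,b_{t-1}$ --- at every earlier iteration. Granting these, both bounds reduce to the short case analysis above on whether the current minimum-dimension client happens to coincide with the maximum-dimension leader.
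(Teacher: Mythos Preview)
Your proof is correct. The $T\le n$ part matches the paper's: both argue that the transmitted vectors $b_1,\dots,b_T$ are linearly independent, though you spell out the details that the paper only asserts.

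For $T\le 2n-n_{max}-n_{min}$ you take a genuinely different route. The paper fixes two specific clients --- the one that starts with $n_{max}$ packets, paired (in two cases) either with the client that transmits in the final round or with a client that is last to reach full dimension --- and argues that in every iteration at least one of these two has its dimension increased, so the total number of iterations is at most the sum of their initial deficits, $(n-n_{max})+(n-n_{min})$. You instead track the global extrema $d^-,d^+$ over all active clients and classify each iteration as productive (raises $d^-$) or wasted (raises $d^+$), obtaining the same bound by a potential/amortized count. Your approach sidesteps the paper's case split on whether the $n_{max}$-client finishes before the last round and avoids having to argue that the two tracked clients survive merging; the paper's approach is a bit shorter once those points are accepted, and makes the appearance of $n_{max}$ and $n_{min}$ transparent as the literal starting dimensions of the two tracked subspaces.
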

\begin{proof}\label{pf:1}
    First, we show that in Step~\ref{ref:alg:2:step:4} of the algorithm it is always possible to select a vector $b$
     in $Y_i$ such that $b\notin Y_j$ for each $j\neq i$. We note that the algorithm maintains the invariant $Y_i\neq Y_j$
      for $i\neq j$. Since in Step~\ref{ref:alg:2:step:3} of the algorithm we select a client $c_i$ with a maximum
      dimension of $Y_i$, we can then use an    argument similar to that used in \cite{JSCEEJT04} to show that there exists a vector $b\in Y_i$ such that
       $b\notin Y_j$ for $j\neq i$.

Note also that once $Y_i=Y_j$ for two different clients $c_i$ and $c_j$, they will be identical for the rest of the
algorithm. Thus, we can remove one of the clients at Step~\ref{ref:alg:2:step:2} as described previously.

We proceed to analyze the number of transmissions made by $\proc{Algorithm\ IE}$. We note that  each transmission
is linearly independent of the others. Therefore, the total number of transmission is bounded by $n$.

%Let $c_i$ be a client with $|X_i|=n_{max}$. We consider two cases. First, suppose that $|Y_i|\neq n$ until the last
% iteration of the main loop (that begins on  Line~\ref{ref:alg:2:step:1}). Then, let $c_j$ be a client that has transmitted
%  a packet at the last iteration. Note that at the beginning of the last iteration, it holds that $Y_j\neq Y_i$
%   (in fact, $Y_i\subset Y_j$). Thus, at each iteration of the main loop the dimension of either $Y_i$ or $Y_j$ (or both)
%    increases by at least one. Since $|X_j|\geq n_{min}$, the total number of transmissions is at most $2n-n_{max}-n_{min}$.
%
%
   Let $c_i$ be a client with $|X_i|=n_{max}$. Note that $n-n_{max}$ transmissions
   are needed in order to satisfy the demands of $c_i$. We consider two cases:

    First,
suppose that $\dim Y_i \neq n$ until the last iteration of the main
loop (that begins on Step 4). Then, let $c_j$ be a client that
has transmitted a packet at the last iteration. Note that
at the beginning of the last iteration $Y_i\subset Y_j$ since $c_j$
should know all the packets to be able to transmit in the
last iteration. Therefore, $c_i$ and $c_j$ will only merge upon the
completion of the algorithm. In terms of the number of
transmissions, the worst case scenario occurs when in
each transmission round, the dimensionality of either $Y_i$
or $Y_j$ , but not both, increases by one. Since $|X_j| > n_{min}$,
the total number of transmissions is at most $n-n_{max}+
n - n_{min} = 2n-n_{max}-n_{min}$.

Second, suppose that the dimension of $Y_i$ is equal to $n$ before the last iteration of the main loop. In this case,
 we select $c_j$ to be the client  for which $|Y_j|\neq n$ until the last iteration. Using a similar argument as in the first case, we can show that the
 number of transmissions is at most  $2n-n_{max}-n_{min}$.
%    $|C|-1$ contributing iterations. Thus the algorithm requires less than $(1+\frac{1}{|C|-1})(1-n_{min})$ transmissions. Since $OPT\geq (1-n_{min})$, the lemma follows.
\end{proof}

\begin{corollary}
 The number of transmissions made by $\proc{Algorithm\ IE}$ is at most two times more than the optimal algorithm.
\end{corollary}
\begin{proof}
    By Lemma~\ref{lem:alex:alg}, the number of transmission made by $\proc{Algorithm\ IE}$ is at
    most $2n-n_{max}-n_{min}$. By Lemma~\ref{lem:1}, the optimum number number of transmission is at
     least $n-n_{min}$. Since $n_{max}\geq n_{min}$ it holds the number of transmission is at most twice the optimum.
\end{proof}

Our empirical results, presented in Section~\ref{sec:numerical}, show that $\proc{Algorithm\ IE}$ performs very well in practical settings.

\section{Relations to Rank Optimization Problems}

In this section, we show that $\tau$ can be obtained by solving a rank optimization problem.

Let $n'=|X_1|+|X_2|+\dots+|X_k|=n_1+n_2+\dots+n_k$. We define $\mathbb{M}_\mathbb{F}(n_i,n)$ to be the set
 of $n_i\times n$ matrices with entries in the field $\mathbb{F}$. To each client $c_i$, we associate the
 set of matrices
  %\textbf{Check: should not this be:
%$a_{kl} = 0 \text{ if } x_l \notin X_i, 1 \le k \le n_i$ }
\begin{equation*}
\begin{split}
\mathbb{A}_\mathbb{F}^i &:=\{[a_{hl}]\in
\mathbb{M}_\mathbb{F}(n_i,n)|a_{hl}=0 \text{ if }
 x_l\notin X_i,\\
 &\quad \ell=1,\dots,n_i \}.
 \end{split}
 \end{equation*}
We also define the set of matrices
$$
\mathbb{A}_\mathbb{F} :=\{A\in \mathbb{M}_\mathbb{F}(n',n)|
 A =\begin{bmatrix}
    \underline{A_1 } \\
    \underline{A_2}\\
    \vdots\\
   \overline{ A_k}
  \end{bmatrix},\text{ where } A_i\in\mathbb{A}_\mathbb{F}^i \}.
$$
%The rows of $A_i$ are all identical and have a ``don't care'' entry in the $j$'th row if $x_j\in X_i$, otherwise a zero. ( this definition should be written more rigorously.)

For example, the matrices $A_i\in\mathbb{A}_\mathbb{F}^i$ for the instance depicted in Figure~\ref{fig:example1} have the following form:
\begin{equation*}
 A_1 =\begin{bmatrix}
    *&0&0&0  \\
  \end{bmatrix},
   A_2 =\begin{bmatrix}
    0&*&0&*  \\
    0&*&0&*
  \end{bmatrix}, \end{equation*}

  \begin{equation*}
  A_3 =\begin{bmatrix}
    0&*&*&0 \\
    0&*&*&0
  \end{bmatrix},
  A_4 =\begin{bmatrix}
    *&0&*&0  \\
    *&0&*&0
  \end{bmatrix}.
 \end{equation*}
where the ``*'' is the ``don't care'' symbol, each entry with this
symbol can independently take any value in the field $\mathbb{F}$.
%Alex: changed $\mathbb{F}^n$ to $\mathbb{F}$

Let $e_1,e_2,\dots, e_n$ be the canonical basis of the vector space
$\mathbb{F}^n$, i.e.\, the coordinates of $e_i$ are all zeros except
the $i$-th coordinate which is equal to 1. To each client $c_i$, we
also associate the matrix $B_i\in\mathbb{M}_\mathbb{F}(n_i,n)$ whose
row vectors are vectors $e_j$ in the canonical basis satisfying
$x_j\in X_i$.

Going back to the   instance depicted in Figure~\ref{fig:example1}, we have:
\begin{equation*}
 B_1 =\begin{bmatrix}
    1&0&0&0  \\
  \end{bmatrix},
   B_2 =\begin{bmatrix}
    0&1&0&0  \\
    0&0&0&1
  \end{bmatrix}, \end{equation*}

  \begin{equation*}
  B_3 =\begin{bmatrix}
    0&1&0&0 \\
    0&0&1&0
  \end{bmatrix},
  B_4 =\begin{bmatrix}
    1&0&0&0  \\
    0&0&1&0
  \end{bmatrix}.
 \end{equation*}

%\begin{equation*}
%  A =\begin{bmatrix}
%    \underline{A_1 } \\
%    \underline{A_2}\\
%    \vdots\\
%   \overline{ A_k}
%  \end{bmatrix}.
% \end{equation*}

The following theorem is easy to establish:

\begin{theorem}
The minimum number of transmissions $\tau$ achieved by linear codes is given by  the following optimization problem:

$$\tau=\min_{A\in \mathbb{A}_\mathbb{F}} \text{rank }(A)$$ subject to:
\begin{equation*}
 \text{rank } \left(\begin{bmatrix}
    A \\
    B_i
  \end{bmatrix}\right)= n,\quad \forall i=1,\dots,k.
 \end{equation*}

\end{theorem}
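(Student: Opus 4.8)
The plan is to show that the optimization problem exactly captures what a linear coding scheme does, by exhibiting a two-way correspondence between feasible linear schemes using $t$ transmissions and feasible matrices $A \in \mathbb{A}_\mathbb{F}$ with $\operatorname{rank}(A) \le t$. First I would set up the language: a linear coding scheme that uses $t$ transmissions is described by the coefficient vectors $\Gamma_{x^{(1)}}, \dots, \Gamma_{x^{(t)}} \in \mathbb{F}^n$ of the transmitted (coded) packets, with the constraint that the $r$-th transmitted packet, if sent by client $c_i$, must lie in the row space of $B_i$ together with the coefficient vectors of all previously transmitted packets. Stack all $t$ coefficient vectors as the rows of a matrix $A'$. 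The decoding requirement is that after all transmissions each client $c_i$ can recover $x_1,\dots,x_n$, i.e. the rows of $B_i$ together with the rows of $A'$ span all of $\mathbb{F}^n$; equivalently $\operatorname{rank}\!\begin{bmatrix} A' \\ B_i \end{bmatrix} = n$ for every $i$.

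The second step is to massage $A'$ into the block form required by $\mathbb{A}_\mathbb{F}$. Group the transmitted packets by the client that sent them: the packets sent by $c_i$ have coefficient vectors lying in the span of $B_i$ (I would argue this is without loss of generality — a causal "who sent what when" scheme can always be replaced by one in which client $c_i$'s transmissions are drawn from $\langle B_i\rangle$, since any vector it could ever transmit is a combination of its original packets and things it has received, and one can reorganize/inflate the scheme so that each transmitted vector is attributed to a client holding it; the causal ordering only restricts which schemes are realizable, never enlarges the achievable rank). Collecting client $c_i$'s contributions into a matrix with $n_i$ rows supported on columns indexed by $X_i$ gives exactly a matrix $A_i \in \mathbb{A}_\mathbb{F}^i$ (padding with zero rows if $c_i$ sent fewer than $n_i$ packets), and stacking these yields $A \in \mathbb{A}_\mathbb{F}$ with $\operatorname{rank}(A) = \operatorname{rank}(A') \le t$ and $\operatorname{rank}\!\begin{bmatrix} A \\ B_i \end{bmatrix} = n$ for all $i$ (adding zero rows changes neither rank). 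Hence $\tau \ge \min_{A \in \mathbb{A}_\mathbb{F}, \text{feasible}} \operatorname{rank}(A)$.

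For the reverse inequality, take any feasible $A \in \mathbb{A}_\mathbb{F}$ achieving the minimum rank $\rho$. Pick $\rho$ linearly independent rows of $A$; each such row lies in some block $A_i$, hence in $\langle B_i \rangle$, so it is a legitimate coded packet that client $c_i$ is able to form from its initial packets alone — in particular the whole set of $\rho$ chosen rows can be broadcast in $\rho$ transmissions with no causality obstruction, since every transmitting client already holds the packets it needs. After these $\rho$ transmissions, client $c_i$ knows the span of $B_i$ together with the span of all $\rho$ rows $=$ row space of $\begin{bmatrix} A \\ B_i \end{bmatrix} = \mathbb{F}^n$, so $c_i$ can decode. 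Thus $\tau \le \rho$, and combining the two inequalities gives equality. I would close by noting the field-size caveat: the correspondence is exact over whatever field $\mathbb{F}$ the scheme uses, so the theorem is a statement for a fixed $\mathbb{F}$; achievability over small fields is a separate matter addressed by the earlier lemmas.

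The main obstacle is the "without loss of generality" reorganization in step two — justifying carefully that the causal structure of a real transmission schedule (client $c_i$ may transmit a combination of packets it has \emph{received}, not just its originals) does not let a scheme beat the static rank bound, and conversely that a static feasible $A$ can always be scheduled causally. The key observation making this routine rather than delicate is that if a vector $v$ is transmitted by $c_i$ at some point, then $v$ lies in $\langle B_i \rangle + \langle \text{previously transmitted vectors} \rangle$; unrolling this recursion shows every transmitted vector lies in $\sum_i \langle B_i\rangle$ restricted appropriately, and one can re-attribute each transmitted vector to \emph{some} client that holds the corresponding packet support, with the chosen $\rho$ independent rows always schedulable first since they come directly from the $\langle B_i\rangle$'s. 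This is why the theorem is "easy to establish."
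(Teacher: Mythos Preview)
The paper does not actually give a proof of this theorem; it simply states that it ``is easy to establish.'' Your proposal therefore supplies exactly what the paper omits, and the argument you outline is correct and is the natural one.

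Your two-direction structure is right. For the direction $\tau \le \min \operatorname{rank}(A)$, your observation is the clean one: each of the $\rho$ independent rows of a feasible $A$ sits in some block $A_i$, hence in $\langle B_i\rangle$, so client $c_i$ can broadcast it using only its initial packets; no causality issue arises at all. For the direction $\tau \ge \min \operatorname{rank}(A)$, the ``unrolling the recursion'' step you flag as the main obstacle is handled correctly: if $v_r$ is the $r$-th transmitted vector and client $c_{i(r)}$ sent it, then $v_r = w_r + (\text{combination of } v_1,\dots,v_{r-1})$ with $w_r \in \langle B_{i(r)}\rangle$, and a triangular substitution gives $\langle v_1,\dots,v_t\rangle = \langle w_1,\dots,w_t\rangle$. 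Grouping the $w_r$'s by client (discarding redundant rows if some client transmitted more than $n_i$ times, then zero-padding up to $n_i$ rows) produces an $A\in\mathbb{A}_\mathbb{F}$ with the same row space as $A'$ and hence the same rank and the same feasibility. Your concern that causality might enlarge the achievable set is thus fully resolved by this substitution; you may want to state that row-space equality explicitly rather than leaving it as a parenthetical.
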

% Salim: can you give a sketch of the proof?
%\begin{proof}(sketch)
%\end{proof}

%This problem is a non-convex optimization problem \cite{} and it is in general hard to solve. Next, we give upper and lower bounds on $\tau$.

\section{Numerical Results}\label{sec:numerical}
In this section, we evaluate the lower and upper bounds on the optimum number of transmissions $\tau$. We also verify the performance of the algorithm presented in the previous section.

Figure~\ref{fig:itw} shows numerical results for $k=3$ clients\footnote{Note that in our numerical analysis, we have taken into account the number of transmissions for broadcasting unique packets. Since this number is the same for all schemes, the relative performance is unaffected.}. The number of packets ranges from $n=10$ to $n=50$. Each curve (except for the top one) represents the average over 100 random initializations of the problem (we randomly selected $X_i$ subject to $\cup X_i = X$). The bottom curve shows the lower bound of Lemma~\ref{lem:1}. The next curve is average number of transmissions required by $\proc{Algorithm\ IE}$. Remarkably, the algorithm performs very close to lower bound. The next curve shows the upper bound of Lemma~\ref{lem:UpperBound}. Finally, the top curve shows the trivial upper bound of $n$ transmissions.

Finally, we have observed similar trends for a larger number of clients $k$ and have omitted the numerical results for brevity.

\begin{figure}
\begin{center}
  % Requires \usepackage{graphicx}
  \includegraphics[width=\columnwidth]{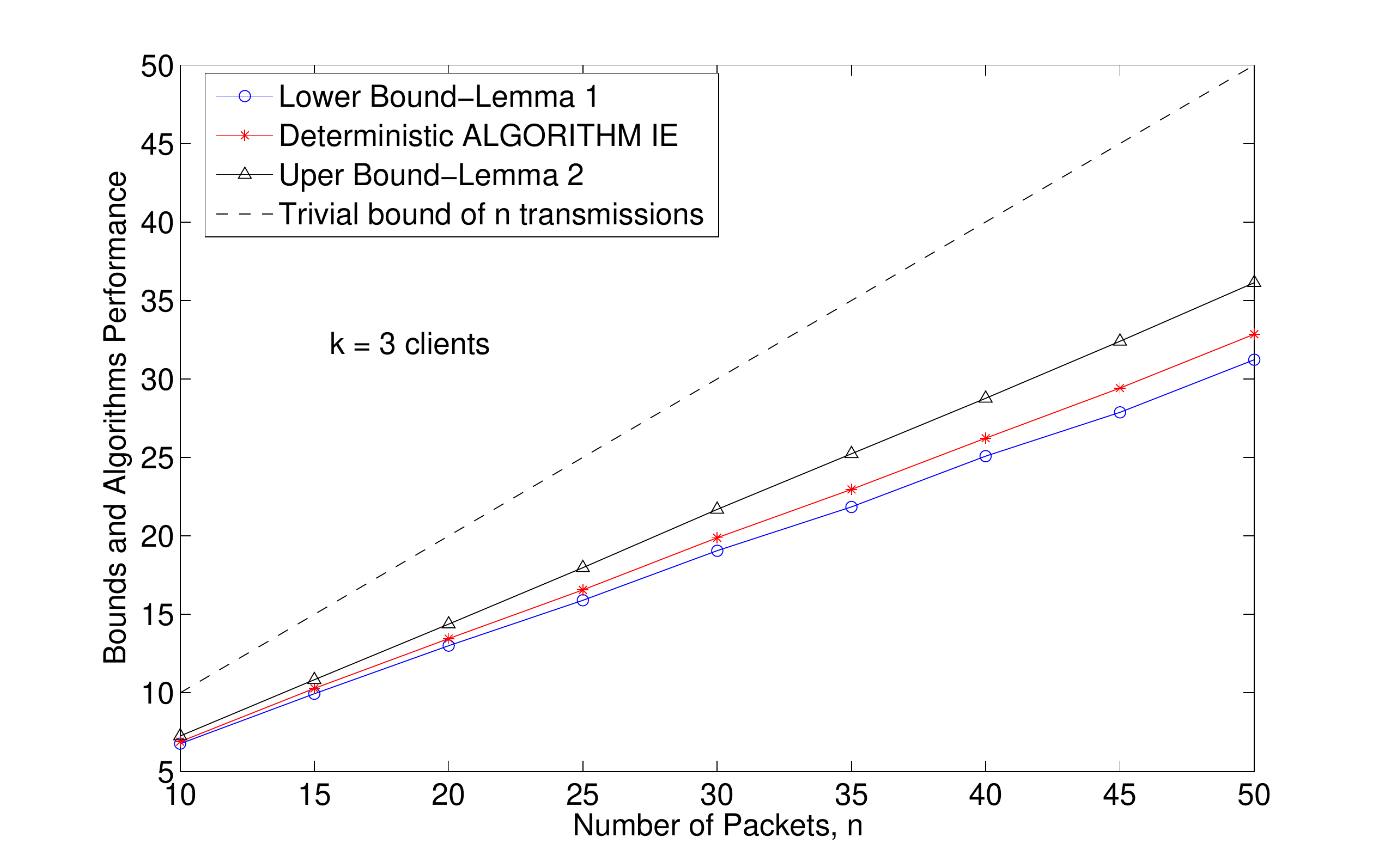}
  \caption{Numerical results with three clients using $\proc{Algorithm\ IE}$ and its comparison with the  upper and lower bounds of Section~\ref{sec:bounds}.}\label{fig:itw}
  \end{center}
\end{figure}

\section{Conclusion}\label{sec:conclusion}
In this paper, we considered the problem  of cooperative data
exchange by a group of wireless clients. Our figure of merit was the
number of transmissions to ensure that each client eventually
obtains all the data. We have established upper and lower bounds on
the optimum number of transmissions. We also presented a
deterministic algorithm, referred to as $\proc{Algorithm\ IE}$, for this problem and analyzed its performance. Empirical results pointed that  the proposed algorithm
performs remarkably close to the lower bound.

This work was only a first step towards understanding  the problem
and there are many interesting directions for future research. We
still do not know  the optimal solution or the computational
complexity of finding one. Furthermore, in analogy with the  index
coding problem,  an interesting open question here is  whether linear codes
are always optimal, and whether there is any advantage to splitting
packets before linearly encoding them, i.e. whether vector linear
codes can lead to a lower number of transmissions than scalar linear
ones.

 Another important issue is to ensure ``fairness''  for all
clients. In this paper we were not concerned with the number of
transmissions each clients makes. In practice, clients have limited
energy resources and hence, it makes sense to find solutions where
the number of transmissions is as uniformly distributed among different clients as possible.

%\section{A Deterministic Algorithm}\label{sec:algorithms}

%\textbf{Remark: If we share unique packets first with $n_{\mathrm{unique}}$ transmissions, then the above algorithm requires no more than $2n'_{max}$, where $n'_{max}$ was defined in Lemma~\ref{lem:LowerBound}. Therefore, the total number of transmissions is upper bounded by $n_{\mathrm{unique}}+2n'_{max}$}.

%
%\bibliographystyle{unsrt}
%
%\bibliography{coding}

\bibliographystyle{IEEEtran}
\bibliography{IEEEabrv,refs2009}

\begin{thebibliography}{10}
\providecommand{\url}[1]{#1}
\csname url@rmstyle\endcsname
\providecommand{\newblock}{\relax}
\providecommand{\bibinfo}[2]{#2}
\providecommand\BIBentrySTDinterwordspacing{\spaceskip=0pt\relax}
\providecommand\BIBentryALTinterwordstretchfactor{4}
\providecommand\BIBentryALTinterwordspacing{\spaceskip=\fontdimen2\font plus
\BIBentryALTinterwordstretchfactor\fontdimen3\font minus
  \fontdimen4\font\relax}
\providecommand\BIBforeignlanguage[2]{{%
\expandafter\ifx\csname l@#1\endcsname\relax
\typeout{** WARNING: IEEEtran.bst: No hyphenation pattern has been}%
\typeout{** loaded for the language `#1'. Using the pattern for}%
\typeout{** the default language instead.}%
\else
\language=\csname l@#1\endcsname
\fi
#2}}

\bibitem{kramer:guest:tit:2007}
G.~Kramer, R.~Berry, A.~E. Gamal, H.~E. Gamal, M.~Franceschetti, M.~Gastpar,
  and J.~N. Laneman, ``Introduction to the special issue on models, theory, and
  codes for relaying and cooperation in communication networks,'' \emph{{IEEE}
  Trans. Inform. Theory}, vol.~53, no.~10, pp. 3297 -- 3301, Oct. 2007.

\bibitem{sendonaris:erkip:aazhang:tcom:03:part:i}
A.~Sendonaris, E.~Erkip, and B.~Aazhang, ``User cooperation diversity--{Part
  I}: System description,'' \emph{{IEEE} Trans. Commun.}, vol.~51, no.~11, pp.
  1927--1938, Nov. 2003.

\bibitem{Rappaport2002}
T.~S. Rappaport, \emph{Wireless Communications, Principles and Practice},
  2nd~ed.\hskip 1em plus 0.5em minus 0.4em\relax Upper Saddle River: Prentice
  Hall, 2002.

\bibitem{sadeghi:traskov:koetter:netcod:09}
P.~Sadeghi, D.~Traskov, and R.~Koetter, ``Adaptive network coding for broadcast
  channels,'' in \emph{Proc. 2009 Workshop on Network Coding, Theory, and
  Applications {(NetCod)}}, Lausanne, Switzerland, June 2009, pp. 80--85.

\bibitem{sundararajan:shah:medard:isit:08}
J.~K. Sundararajan, D.~Shah, and M.~Medard, ``{ARQ} for network coding,'' in
  \emph{Proc. {IEEE} Int. Symp. on Inform. Theory ({ISIT})}, Toronto, Canada,
  July 2008, pp. 1651--1655.

\bibitem{BK06}
Y.~Birk and T.~Kol, ``Coding-on-demand by an informed source ({ISCOD}) for
  efficient broadcast of different supplemental data to caching clients,''
  \emph{IEEE Transactions on Infromation Theory}, vol.~52, no.~6, pp.
  2825--2830, June 2006.

\bibitem{BBJK06}
Z.~Bar-Yossef, Y.~Birk, T.~S. Jayram, and T.~Ko, ``Index coding with side
  information,'' in \emph{Proc. of the 47th Annual IEEE Symposium on
  Foundations of Computer Science (FOCS)}, 2006, pp. 197--206.

\bibitem{RSG08}
S.~E. Rouayheb, A.~Sprintson, and C.~N. Georghiades, ``On the relation between
  the index coding and the network coding problems,'' \emph{Proc. of IEEE
  International Symposium on Information Theory (ISIT08)}, 2008.

\bibitem{RSG09ISIT}
------, ``A new construction method for networks from matroids,'' in
  \emph{Proceedings of the IEEE International Symposium on Information Theory
  (ISIT)}, Seoul, Korea, June 2009.

\bibitem{RSA09i}
------, ``On the index coding problem and its relation to network coding and
  matroid theory,'' 2009, submitted to IEEE Transactions on Information Theory.

\bibitem{D09}
D.~Shah, \emph{Gossip Algorithms (Foundations and Trends in Networking)}.\hskip
  1em plus 0.5em minus 0.4em\relax Now Publishers Inc, 2007, vol.~3, no.~1.

\bibitem{JSCEEJT04}
S.~Jaggi, P.~Sanders, P.~A. Chou, M.~Effros, S.~Egner, K.~Jain, and
  L.~Tolhuizen, ``Polynomial {T}ime {A}lgorithms for {M}ulticast {N}etwork
  {C}ode {C}onstruction,'' \emph{IEEE Transactions on Information Theory},
  vol.~51, no.~6, pp. 1973--1982, 2005.

\end{thebibliography}

\end{document}